\author{Philipp Hager}
\affiliation{%
    \institution{University of Amsterdam}
    \city{Amsterdam}
    \country{The Netherlands}
}
\email{p.k.hager@uva.nl}
\author{Onno Zoeter}
\affiliation{%
    \institution{Booking.com}
    \city{Amsterdam}
    \country{The Netherlands}
}
\email{onno.zoeter@booking.com}
\author{Maarten de Rijke}
\affiliation{%
    \institution{University of Amsterdam}
    \city{Amsterdam}
    \country{The Netherlands}
}
\email{m.derijke@uva.nl}
\newcommand\mynobreakpar{\par\nobreak\@afterheading} 
\def\reals{\mathbb{R}}
\newcommand{\supp}[1]{\text{supp}\left( {#1} \right)}
\begin{document}

\renewcommand\footnotetextcopyrightpermission[1]{}

\title[Unidentified and Confounded? Understanding Two-Tower Models for Unbiased Learning to Rank]{Unidentified and Confounded? Understanding Two-Tower Models for Unbiased Learning to Rank (Extended Abstract)}
\titlenote{Based on a full paper "Unidentified and Confounded? Understanding Two-Tower Models for Unbiased Learning to Rank" published at ICTIR'25~\cite{Hager2025TwoTowers}.}




\maketitle

\vspace*{-1mm}
\section{Introduction}

\begin{figure}
    \hspace*{-2em}
    \includegraphics[clip, trim=0mm -18mm 0mm 0mm, width=0.9\linewidth]{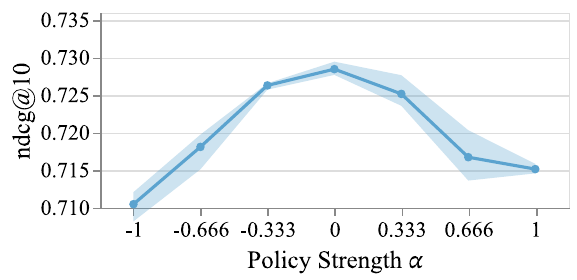}
    \vspace*{-1.4cm}
    \caption{Two-tower models trained on deterministic logging policies of varying strengths ($\alpha$) on MSLR30K: $\alpha = 1$ represents sorting by expert annotations, $\alpha = 0$ random sorting, and $\alpha = -1$ inversely ranking from least to most relevant.}
    \label{fig:example}
\end{figure}

Two-tower models are a popular solution to address click biases in industrial settings~\cite{Guo2019PAL,Yan2022TwoTowers,Zhao2019AdditiveTowers,Haldar2020Airbnb,Khrylchenko2023Yandex}. These models consist of two neural networks, referred to as towers: a relevance tower predicting item relevance based on content-related features and a bias tower predicting user examination from contextual features such as document position or display height on screen~\cite{Yan2022TwoTowers,Chen2023-TENCENT-ULTR-1}. During training, outputs from both towers are combined to predict user clicks, whereas, at serving time, only the relevance tower is employed.

However, recent studies report a troubling phenomenon: as production logging policies become more effective at surfacing relevant items, the ranking performance of two-tower models trained on clicks from those policies steadily declines~\citep{Zhang2023Disentangling}. Two explanations have emerged: \emph{confounding} and \emph{identifiability}.
\citet{Zhang2023Disentangling} theorize that stronger logging policies act as confounders by introducing correlations between bias features and document relevance~\citep{Luo2024UnbiasedPropensity}. While confounding explanations are intuitive, they are theoretically unsatisfactory since two-tower models explicitly condition on both relevance and bias features.\footnote{Two-tower models condition on relevance features and position, essentially already blocking the problematic paths highlighted in \cite[Fig. 2b]{Zhang2023Disentangling} and \cite[Fig. 1b]{Luo2024UnbiasedPropensity}.} Yet, we observe the same empirical effect in reproducibility experiments. Fig.~\ref{fig:example} displays the ranking performance of two-tower models trained on data collected by policies of varying strengths\footnote{Different strengths in Fig.~\ref{fig:example} are based on sorting by expert relevance annotations with additive noise~\cite{Zhang2023Disentangling,Deffayet2023CMIP}. We show that this policy simulation is problematic.} and ranking performance decreases when the logging policy is not uniformly random.

\citet{Chen2024Identifiability} discuss an alternative explanation in terms of identifiability. Intuitively, identifiability asks under which conditions we can (uniquely) recover model parameters from observed data. Inferring model parameters in click modeling is a well-known challenge~\cite{Chuklin2015ClickModels,Joachims2017IPW} as estimating position bias parameters typically requires observations of the same query-document pair across two positions (a swap) to correctly attribute changes in clicks to bias rather than a shift in document relevance~\cite{Craswell2008PBM,Agarwal2019AllPairs,Fang2019InterventionHarvesting}. The influence of logging policies on identifiability seems clear: our production system might not collect suitable data for training two-tower models, e.g., by never swapping documents across positions~\cite{Chen2024Identifiability}. The experiments on logging policy confounding in \cite{Zhang2023Disentangling,Luo2024UnbiasedPropensity} and Fig.~\ref{fig:example} use deterministic logging policies without document swaps, suggesting identifiability might be a culprit.

In this work, we analyze two-tower models by investigating conditions for identifiability and studying the effects of the current production policy on training two-tower models.

\section{Identifiability of Two-Tower Models}
\label{sec:identifiability}
Our analysis focuses on additive two-tower models for position bias correction~\cite{Guo2019PAL,Yan2022TwoTowers,Zhao2019AdditiveTowers,Haldar2020Airbnb}. Each tower outputs logits which are combined using the sigmoid function $\sigma(\cdot)$:
\begin{equation}
    \label{eq:two-towers}
    P(C = 1 \mid q, d, k) = \sigma(\theta_{k} + \gamma_{q,d}),
\end{equation}
where $\theta_{k} \in \reals$ is the position bias logit for rank $k \in \{1, 2, \dots \}$ and $\gamma_{q,d} \in \reals$ is the relevance logit for document $d \in D$ and query $q \in Q$.

We begin our analysis by examining when we can uniquely recover the parameters of two-tower models from click data, fundamentally a question of identifiability. A model is identifiable if no two distinct sets of parameters induce the same distribution over observable variables \cite[Definition 19.4]{Koller2009PGM}. Unidentifiability, in turn, implies that we cannot uniquely recover the true model parameters from observed data as there are multiple (possibly infinite) parameters that explain the observed data equally well.

Two-tower models face inherent identifiability challenges since we only observe clicks while position bias and document relevance remain unobserved, leading to the problem of parameter shifts: we can arbitrarily increase our relevance parameters while decreasing our bias parameters by the same amount, resulting in identical click probabilities but different parameters. Consider a set of alternative parameters $\theta_{k}'$ and $\gamma_{q,d}'$ next to our true parameters $\theta_{k}$ and $\gamma_{q,d}$:
\begin{equation}\label{eq:shift}
    \theta_{k} + \gamma_{q,d} = \theta_{k}' + \gamma_{q,d}' \quad \forall \, (q,d,k).
\end{equation}
We can construct infinite valid alternative parameters: $\theta_{k}' = \theta_{k} + \Delta_k$ and $\gamma_{q,d}' = \gamma_{q,d} - \Delta_k$, where $\Delta_k \in \reals$ can be chosen independently for each rank. When documents appear exclusively in a single position, we can choose an arbitrary offset $\Delta_k$ to shift their relevance parameters and adjust the bias parameter accordingly without changing the overall click probabilities. Therefore, \textit{additive two-tower models are unidentifiable when observing each query-document pair only at a single position~\cite{Chen2024Identifiability,Oosterhuis2022Limitations}}.

\vspace*{-1mm}
\subsection{Identification through randomization}
\label{sec:identifiability-randomization}

When observing query-document pairs across positions, parameter offsets become constraint. If the same query-document pair $q,d$ appears in positions 1 and 2, then the parameter offsets between both positions must be equal: $\gamma_{q,d} - \gamma_{q,d}' = \Delta_1 = \Delta_2.$

Random swaps between positions unify parameter shifts across positions to a single global parameter shift, which is commonly resolved through normalizing parameters~\cite[Section 6.3]{Lewbel2019IdentificationZoo} (e.g., fixing the first bias logit to $\theta_1=0$). Note that not all query-document pairs need to be observed across all positions. \citet{Chen2024Identifiability} observe that it is sufficient that an undirected graph $G = (V, E)$, where vertices $V$ correspond to positions and edges $E$ exist between each pair of ranks that shares at least one query-document pair, is connected~\cite[Theorem 1]{Chen2024Identifiability}. Thus, \textit{additive two-tower models are identifiable up to an additive constant when observing query-document pairs across positions, such that all positions form a connected graph.} 

\vspace*{-1mm}
\subsection{Identification through overlapping features}
\label{sec:Identification-feature-overlap}

Most applications of two-tower models do not use independent model parameters for each query-document pair. Instead, we represent query-document pairs through feature vectors $x_{q,d} \in \reals^m$ and the relevance tower $r(\cdot)$ is, e.g., a linear model or a neural network:
\begin{equation}
    P(C = 1 \mid q, d, k) = \sigma(\theta_{k} + r(x_{q,d})).
\end{equation}
Our main result builds on \cite[Theorem 1]{Chen2024Identifiability} and shows that identifiability is possible without swaps as long as the feature distributions of query-document pairs overlap across positions:
\begin{equation}
    \label{eq:identifiability-features}
        \begin{split}
        \supp{P(x \mid k)} &\cap \supp{P(x \mid k')} \neq \emptyset,
        \end{split}
\end{equation}
and the relevance tower $r(\cdot)$ is continuous. Continuity ensures that small shifts in query-document features do not cause large, discontinuous jumps in relevance prediction. We provide the proof in Appendix~\ref{appendix:overlapping-features} and highlight practical challenges of obtaining overlapping feature distributions in Appendix~\ref{appendix:practical-pitfalls}. In summary, \textit{identifying additive two-tower models requires overlap in distributions across ranks, either through explicit randomization or shared query-document features that allow disentangling bias and relevance}.

\vspace*{-1mm}
\section{Influence of the Logging Policy}
\label{sec:logging_policy}

\begin{figure*}[h!]
    \includegraphics[width=1\textwidth]{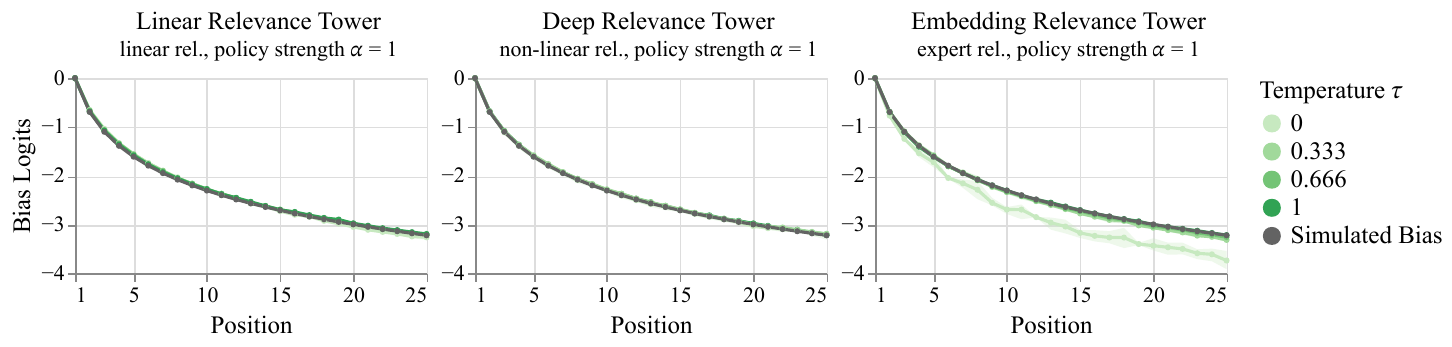}
    \caption{Evaluating position bias for three well-specified models matching the simulated user behavior. All models converge to ground-truth bias parameters while being trained on the strongest logging policy $(\alpha = 1)$. Feature-based models recover position bias without any document swaps ($\tau = 0$), while the embedding-based model requires swaps for identification ($\tau > 0$).}
    \label{fig:model-fit}
\end{figure*}

\begin{figure*}[h!]
    \includegraphics[width=1\textwidth]{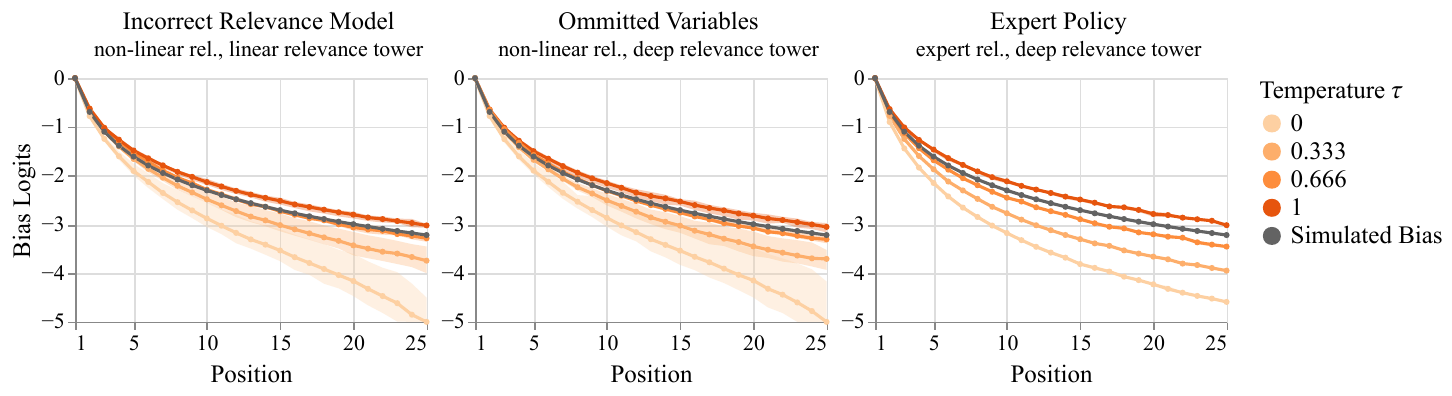}
    \caption{Evaluating position bias under model misspecification: (left) training a linear relevance model on non-linear user behavior, (middle) using fewer features when training two-tower models than are available to the logging policy, and (right) simulating an oracle logging policy by sorting by expert relevance labels from MSLR30K. The logging policy amplifies bias in all models. Model misspecification is visible as estimations do not match the simulated bias on fully randomized data ($\tau = 1$).}
    \label{fig:model-misfit}
\end{figure*}

\vspace*{-1mm}
The connection between a logging policy and identifiability seems clear: we must collect overlapping feature or document distributions across ranks. However, it is unclear if this rather binary condition can explain the gradual degradation in ranking performance as found in related work \cite{Zhang2023Disentangling,Luo2024UnbiasedPropensity} and our experiment in Fig.~\ref{fig:example}. Therefore, we investigate the role of logging policies beyond identifiability. To begin, we highlight the role of the logging policy by rewriting the standard training objective of two-tower models, the expected negative log-likelihood:
\begin{equation}
    \begin{split}
        &\mathcal{L}(\theta, \gamma)=
        - \sum_{q} P(q)\! \sum_{d, k} \pi(d, k \mid q) \bigl[ P(C\!=\!1 \mid q,d,k) \cdot {}
        \\[-4pt]
        &\mbox{}\hspace*{0.5cm} 
        \log \sigma(\theta_k + \gamma_{q,d})+ P(C\!=\!0|q,d,k)\log (1-\sigma(\theta_k + \gamma_{q,d})) \bigr].
    \end{split}
\end{equation}
We define a policy as the joint probability $\pi(d, k \mid q)$ of a document $d$ being shown at rank $k$ for query $q$. First, we can observe that the policy weights the contribution of each query-document pair to the loss. Our analysis below separates the impact of logging policies on well-specified versus misspecified models.

\begin{lemma}[No policy impact on well-specified models]
\label{lemma:well-specified}
If a two-tower model is well-specified (i.e., can perfectly model the true click probabilities) and identifiable, then the logging policy has no effect on the estimated model parameters.
\end{lemma}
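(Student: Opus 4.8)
The plan is to show that the unique minimizer of $\mathcal{L}(\theta,\gamma)$ does not depend on $\pi$ by exploiting the fact that the loss decomposes into a sum of per-$(q,d,k)$ terms, each of which is individually minimized at the true parameters. First I would recall that for fixed $q,d,k$ the inner bracket
\begin{equation*}
 - \bigl[ P(C\!=\!1\mid q,d,k)\log s + P(C\!=\!0\mid q,d,k)\log(1-s) \bigr]
\end{equation*}
viewed as a function of $s = \sigma(\theta_k + \gamma_{q,d}) \in (0,1)$ is the cross-entropy between the true Bernoulli click distribution and the predicted one; it is strictly convex in $s$ and uniquely minimized at $s = P(C\!=\!1\mid q,d,k)$. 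Hence each summand of $\mathcal{L}$ is bounded below by the corresponding true-distribution entropy, with equality if and only if $\sigma(\theta_k+\gamma_{q,d}) = P(C\!=\!1\mid q,d,k)$.

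Next I would invoke well-specification: by assumption there exist parameters $(\theta^*,\gamma^*)$ with $\sigma(\theta^*_k+\gamma^*_{q,d}) = P(C\!=\!1\mid q,d,k)$ for all $(q,d,k)$ that receive positive logging probability. Plugging these in makes every summand attain its pointwise lower bound simultaneously, so $(\theta^*,\gamma^*)$ is a global minimizer of $\mathcal{L}$ regardless of the weights $\pi(d,k\mid q)$ (as long as $\pi$ and $P(q)$ are positive on the relevant support, which is exactly what identifiability via randomization or feature overlap guarantees). Conversely, any minimizer must match the pointwise minimum on every $(q,d,k)$ with $P(q)\pi(d,k\mid q) > 0$, i.e.\ must satisfy $\theta_k + \gamma_{q,d} = \theta^*_k + \gamma^*_{q,d}$ there. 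Finally I would apply the identifiability hypothesis (Section~\ref{sec:identifiability-randomization} or \ref{sec:Identification-feature-overlap}): the constraints $\theta_k + \gamma_{q,d} = \theta^*_k + \gamma^*_{q,d}$ on the observed support pin down $(\theta,\gamma)$ up to the single global additive constant, which is removed by the chosen normalization (e.g.\ $\theta_1 = 0$). Therefore the minimizer is unique and equals the true parameters, independent of $\pi$.

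The main obstacle, and the place the argument needs care, is the interaction between the \emph{support} of the logging policy and the notion of identifiability being used. The pointwise argument only constrains parameters on $(q,d,k)$ triples with positive probability under $P(q)\pi(d,k\mid q)$; off-support triples are completely unconstrained by $\mathcal{L}$. So the statement is really "identifiable \emph{given the data the policy produces}" — I would make explicit that the identifiability hypothesis in the lemma must refer to the observed support induced by $\pi$, so that different policies with the same (sufficiently rich) support all yield the same estimate, whereas a policy whose support is too small would violate the identifiability premise rather than the conclusion. A secondary subtlety is that with a feature-based relevance tower $r(x_{q,d})$ the per-pair minimization is over the function class, not free logits; here I would note that well-specification means the true $r^*$ lies in the class, so the same "every summand hits its floor" argument goes through, and continuity of $r$ plus feature overlap (Eq.~\ref{eq:identifiability-features}) supplies identifiability of the remaining constant. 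Modulo these support/parametrization caveats, the proof is essentially the observation that a sum of individually-minimizable nonnegative terms is minimized termwise, so reweighting the terms by $\pi$ changes nothing.
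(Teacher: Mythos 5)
Your proof is correct and reaches the same key condition as the paper --- that any optimum must satisfy $\sigma(\theta_k+\gamma_{q,d}) = P(C\!=\!1\mid q,d,k)$ on the support of $P(q)\,\pi(d,k\mid q)$, after which identifiability removes the residual additive freedom --- but you get there by a slightly different and, in fact, tighter route. The paper differentiates $\mathcal{L}$ with respect to $\gamma_{q,d}$ and $\theta_k$ and asserts that for these policy-weighted sums of residuals to vanish, the residual must vanish at every displayed $(q,d,k)$; strictly speaking, a weighted sum of residuals can vanish without each term vanishing, so the paper's stationarity argument alone does not force pointwise matching (this is exactly the loophole its Lemma~\ref{lemma:misspecified} lives in). Your argument closes that gap: you use the strict convexity of each per-triple cross-entropy term in $s=\sigma(\theta_k+\gamma_{q,d})$ to show every summand has a unique floor at the true click probability, that well-specification lets all floors be hit simultaneously, and hence that \emph{any} global minimizer must match pointwise on the support --- making the independence from $\pi$ a statement about the minimizer rather than about one stationary point. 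Your explicit caveat that ``identifiable'' must be read relative to the support induced by the logging policy is also the right reading of the lemma's hypothesis, and is only implicit in the paper. The one thing to keep an eye on in the feature-based case is that termwise minimization is over the function class $r(\cdot)$ rather than free logits, but as you note, well-specification (the true $r^*$ lies in the class) makes the same argument go through.
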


\begin{proof}
The partial derivatives of the loss function with respect to the model parameters are:
\begin{equation}
\begin{split}
    &\mbox{}\hspace*{-3mm}\frac{\partial \mathcal{L}}{\partial \gamma_{q,d}} \!=\!{}\\[-3pt]
    &\mbox{}\hspace*{5mm}P(q) \! \sum_{k} \pi(d, k \mid q) \bigl[ P(C\!=\!1 \mid q,d,k) - \sigma(\theta_k + \gamma_{q,d}) \bigr]\!=\!0,
    \hspace*{-3mm}\mbox{}
\end{split}
\label{eq:relevance-derivative}
\end{equation}
\vspace*{-3mm}
\begin{equation}
\begin{split}
    & \frac{\partial \mathcal{L}}{\partial \theta_k} \!=\! {}\\
    & \sum_{q} P(q) \! \sum_{d} \pi(d, k \mid q) \bigl[ P(C\!=\!1 \mid q,d,k) - \sigma(\theta_k + \gamma_{q,d}) \bigr]\!=\!0.
\end{split}    
\label{eq:bias-derivatie}
\end{equation}

For these gradients to vanish at the optimal parameters, the following condition must hold for every query-document pair with non-zero display probability, i.e., when $\pi(d, k \mid q) > 0$:
\begin{equation}
    P(C\!=\!1 \mid q,d,k) = \sigma(\theta_k + \gamma_{q,d}).
\end{equation}
In a well-specified model, this condition can be satisfied for all query-document pairs where $\pi(d, k \mid q) > 0$. Thus, the influence of the logging policy on the estimated parameters vanishes as long as identifiability is guaranteed.
\end{proof}

\begin{lemma}[Logging policy impact on misspecified models]
\label{lemma:misspecified}
When a two-tower model is misspecified, systematic correlations between the model's residual errors and the logging policy can introduce bias in parameter estimation, even when identifiability is guaranteed.
\end{lemma}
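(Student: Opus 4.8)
The plan is to reuse the first-order optimality conditions derived in the proof of Lemma~\ref{lemma:well-specified}. At any stationary point of $\mathcal{L}$, define the residual $\varepsilon_{q,d,k} := P(C=1\mid q,d,k) - \sigma(\theta_k + \gamma_{q,d})$. Stationarity says that $\sum_k \pi(d,k\mid q)\,\varepsilon_{q,d,k} = 0$ for every query-document pair and $\sum_q P(q)\sum_d \pi(d,k\mid q)\,\varepsilon_{q,d,k} = 0$ for every rank. When the model is well-specified, all residuals can be driven to zero simultaneously and the weights $\pi$ drop out, which is exactly Lemma~\ref{lemma:well-specified}. The idea is to show that once the relevance tower is confined to a hypothesis class that cannot represent the true relevance map, the residual vector is bounded away from zero at every stationary point, so the stationarity equations collapse to a nontrivial $\pi$-weighted balancing of nonzero residuals; hence the stationary parameters, and in particular $\hat\theta$, depend on $\pi$.

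First I would fix a precise notion of misspecification: the relevance tower $r(\cdot)$ ranges over a restricted class (e.g., linear in $x_{q,d}$, or constant across a group of documents that truly differ in relevance) not containing $x_{q,d}\mapsto \gamma^*_{q,d}$, so that at least one $\varepsilon_{q,d,k}\neq 0$ at any stationary point. Second, I would build a minimal explicit witness: two ranks with true biases $\theta^*_1=0$ and $\theta^*_2<0$, a small set of query-document pairs whose true relevance logits are not all equal, and a relevance model forced to assign a single logit to two documents of differing true relevance. I would keep enough support overlap across the two ranks that the model is identifiable (so the unidentifiability explanation of Section~\ref{sec:identifiability} is ruled out), then compare two logging policies $\pi_1,\pi_2$ that differ only in how they split those documents between the two ranks. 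Writing out the two optimality equations for each policy, I would show that the induced estimates $\hat\theta_2(\pi_1)$ and $\hat\theta_2(\pi_2)$ differ, so at least one differs from $\theta^*_2$; the clean way to see this is that $\hat\theta_2=\theta^*_2$ for two sufficiently different policies at once would force all policy-weighted residuals to vanish, which is the well-specification condition in disguise and contradicts the construction.

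As a complementary, more quantitative route that matches the ``gradual degradation'' phenomenon, I would treat the stationary parameters as an implicit function $(\hat\theta(\pi),\hat\gamma(\pi))$ of the policy through the gradient system and differentiate with respect to a scalar parameter of $\pi$. The derivative solves a linear system whose right-hand side is proportional to the residual vector $\varepsilon$; since $\varepsilon\neq 0$ under misspecification and the Jacobian of the gradient system is nonsingular at the normalized optimum under identifiability, this derivative is generically nonzero, so $\hat\theta$ moves continuously with $\pi$ rather than changing only at the boundary of identifiability.

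The main obstacle is that the statement is qualitative (``can introduce bias''), so the real work is choosing the formalization and a transparent witness, and above all separating this effect from the identifiability effect: the example must have overlapping feature and document support across ranks so that, were the model well-specified, Lemma~\ref{lemma:well-specified} would guarantee policy-independence; only then is the $\pi$-weighting of nonzero residuals clearly the culprit. A secondary technical point is excluding the coincidental case in which the policy-weighted residuals happen to vanish at the true parameters for all policies --- this is precisely well-specification restated, so it suffices to argue that demanding it for two sufficiently different policies forces zero residuals everywhere.
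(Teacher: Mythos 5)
Your first paragraph is, in essence, the paper's entire proof: the paper defines the residual $\epsilon(q,d,k) = P(C{=}1\mid q,d,k) - \sigma(\theta_k+\gamma_{q,d})$, writes the stationarity conditions as the two $\pi$-weighted residual sums in Eq.~\eqref{eq:gradient-residual}, and then argues verbally that if the residuals are systematically correlated with $\pi(d,k\mid q)$, the optimizer must move the parameters away from their true values to satisfy these conditions. So the core route is identical. Where you go further is in recognizing that this verbal step is not yet a proof and in proposing two concrete ways to close it: an explicit two-rank, two-policy witness in which the stationary $\hat\theta_2$ provably differs across policies (with the clean contradiction that policy-independence for two sufficiently different policies would force all residuals to vanish, i.e., well-specification), and a sensitivity analysis treating $(\hat\theta(\pi),\hat\gamma(\pi))$ as an implicit function of the policy, whose derivative is generically nonzero because the right-hand side of the linear system is driven by the nonzero residual vector. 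The paper does neither; it relies on the simulation experiments (Fig.~\ref{fig:model-misfit}) to play the role of the witness. Your insistence on keeping the witness identifiable (overlapping support across ranks) so that the effect is cleanly separated from the unidentifiability mechanism of Section~\ref{sec:identifiability} is exactly the right discipline, and the implicit-function route is the one that would actually substantiate the \emph{gradual} degradation with policy strength that the lemma is meant to explain. In short: same decomposition and same key observation as the paper, but your plan is strictly more rigorous; the only caveat is that both additional routes remain sketches and would need the nonsingularity of the Jacobian at the normalized optimum (and the exclusion of coincidental residual cancellation) to be carried out in full.
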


\begin{proof}
When the model is unable to match the true click probabilities, we have a non-zero residual click prediction error:
\begin{equation}
    \epsilon(q,d,k) \equiv P(C\!=\!1 \mid q,d,k) - \sigma(\theta_k+\gamma_{q,d}) \neq 0,
\end{equation}
with the gradient conditions becoming:
\begin{equation}\label{eq:gradient-residual}
    \sum_k\pi(d,k\mid q)\,\epsilon(q,d,k)=0,
     \ \
    \sum_q\sum_d\pi(d,k\mid q)\,\epsilon(q,d,k)=0.
\end{equation}
These conditions state that the policy-weighted averages of the residuals must vanish across positions and across items. However, when our model's click prediction errors $\epsilon(q,d,k)$ are systematically correlated with the policy's display probability $\pi(d, k \mid q)$, the optimizer must shift model parameters from their true values to satisfy these conditions. Model misspecification alone can bias model parameters as the average residual error might be non-zero even under a uniform random logging policy, but the logging policy can amplify the bias in parameter estimation.
\end{proof}

\noindent%
We reveal a nuanced relationship between logging policies and additive two-tower models. While perfectly specified models remain unaffected by logging policies (beyond identifiability concerns), real-world models might inevitably contain some degree of misspecification. This creates a vulnerability where systematic correlations between model errors and logging policy behaviors can significantly distort parameter estimation. \emph{The more a logging policy systematically favors certain documents for certain positions, the more these correlations can amplify estimation bias.}

Our analysis reveals logging policies affect two-tower models only through model misspecification. We discuss three critical scenarios:
\begin{enumerate*}[label=(\roman*)]
\item Functional form mismatch: When fitting linear models to non-linear user behavior, prediction errors can correlate with document placement.
\item Omitted variable bias: If logging policies use features unavailable to the current two-tower model (e.g., business rules), systematic prediction errors can emerge that correlate with position~\cite{Wilms2021OmittedVariableBias}.
\item Expert policy in simulation: A common but problematic practice is simulating strong policies by sorting directly on expert labels rather than model predictions~\cite{Zhang2023Disentangling,Deffayet2023CMIP}. This creates artificial omitted variable bias since expert labels contain information not captured by query-document features and explains the problematic experimental results in \cite{Zhang2023Disentangling} and Fig.~\ref{fig:example}.\footnote{Appendix~\ref{appendix:motivating-example} shows the results of our motivating example without sorting by expert annotations.}
\end{enumerate*}

\vspace*{-3mm}
\section{Experiments}
We empirically test our claims by evaluating if well-specified and misspecified two-tower models can recover position bias parameters in simulation experiments (Appendix~\ref{appendix:experimental-setup}). We  simulate the strongest possible logging policy with varying levels of randomization, where $\tau=0$ represents a deterministic ranking and $\tau=1$ represents a uniform random ranking. For well-specified models (Fig.~\ref{fig:model-fit}), we confirm that separate query-document parameters require document swaps for identification (failing under deterministic policies with $\tau = 0$ in the right panel), while feature-based models (left and middle panels) achieve perfect parameter recovery even on deterministic logging policies. In all well-specified models, logging policies have no effect once identifiability is guaranteed. For misspecified models (Fig.~\ref{fig:model-misfit}), we test three scenarios and find that all cases show systematic bias amplification as logging policies become less random (lower $\tau$), confirming that non-uniform logging policy exposure across ranks intensifies bias in misspecified models.

\vspace*{-3mm}
\section{Conclusion}
Our work investigates why two-tower models show degrading performance on data from strong logging policies by examining identifiability conditions and logging policy effects. We show that two-tower models can be identified without document swaps through overlapping feature distributions across ranks, and that logging policies do not affect well-specified models. However, misspecified models with systematic prediction errors experience bias amplification when these errors correlate with document placement, empirically demonstrated across three scenarios. Practically, our work suggests
\begin{enumerate*}[label=(\roman*)]
\item monitoring click residuals for logging policy correlations to detect model misspecification, 
\item collecting randomized data when feasible to ensure overlapping document or feature distributions across positions, and 
\item avoid sorting by expert labels in simulation.
\end{enumerate*}
While practitioners should prioritize reducing model misspecification, when some degree of misspecification is inevitable, we propose a sample weighting scheme to counteract potential logging policy influences in Appendix~\ref{appendix:ips}.

\vspace*{-1mm}
\begin{acks}
We thank Philip Boeken and Shashank Gupta for their insightful discussions and valuable feedback. This research was supported by the Mercury Machine Learning Lab created by TU Delft, the University of Amsterdam, and Booking.com.
Maarten de Rijke was supported by the Dutch Research Council (NWO), project nrs 024.004.022, NWA.1389.20.\-183, and KICH3.LTP.20.006, and the European Union's Horizon Europe program under grant agreement No 101070212. All content represents the opinion of the authors, which their employers or sponsors do not necessarily endorse.
\end{acks}

\bibliographystyle{ACM-Reference-Format}
\balance
\bibliography{main}


\begin{thebibliography}{43}


\ifx \showCODEN    \undefined \def \showCODEN     #1{\unskip}     \fi
\ifx \showDOI      \undefined \def \showDOI       #1{#1}\fi
\ifx \showISBNx    \undefined \def \showISBNx     #1{\unskip}     \fi
\ifx \showISBNxiii \undefined \def \showISBNxiii  #1{\unskip}     \fi
\ifx \showISSN     \undefined \def \showISSN      #1{\unskip}     \fi
\ifx \showLCCN     \undefined \def \showLCCN      #1{\unskip}     \fi
\ifx \shownote     \undefined \def \shownote      #1{#1}          \fi
\ifx \showarticletitle \undefined \def \showarticletitle #1{#1}   \fi
\ifx \showURL      \undefined \def \showURL       {\relax}        \fi
\providecommand\bibfield[2]{#2}
\providecommand\bibinfo[2]{#2}
\providecommand\natexlab[1]{#1}
\providecommand\showeprint[2][]{arXiv:#2}

\bibitem[Agarwal et~al\mbox{.}(2019)]%
        {Agarwal2019AllPairs}
\bibfield{author}{\bibinfo{person}{Aman Agarwal}, \bibinfo{person}{Ivan Zaitsev}, \bibinfo{person}{Xuanhui Wang}, \bibinfo{person}{Cheng Li}, \bibinfo{person}{Marc Najork}, {and} \bibinfo{person}{Thorsten Joachims}.} \bibinfo{year}{2019}\natexlab{}.
\newblock \showarticletitle{Estimating Position Bias without Intrusive Interventions}. In \bibinfo{booktitle}{\emph{The 12th ACM International Conference on Web Search and Data Mining (WSDM)}}.
\newblock


\bibitem[Ai et~al\mbox{.}(2018)]%
        {Ai2018DLA}
\bibfield{author}{\bibinfo{person}{Qingyao Ai}, \bibinfo{person}{Keping Bi}, \bibinfo{person}{Cheng Luo}, \bibinfo{person}{Jiafeng Guo}, {and} \bibinfo{person}{W.~Bruce Croft}.} \bibinfo{year}{2018}\natexlab{}.
\newblock \showarticletitle{Unbiased Learning to Rank with Unbiased Propensity Estimation}. In \bibinfo{booktitle}{\emph{The 41st International ACM SIGIR Conference on Research and Development in Information Retrieval (SIGIR)}}.
\newblock


\bibitem[Bekker et~al\mbox{.}(2019)]%
        {Bekker2019PointwiseIPS}
\bibfield{author}{\bibinfo{person}{Jessa Bekker}, \bibinfo{person}{Pieter Robberechts}, {and} \bibinfo{person}{Jesse Davis}.} \bibinfo{year}{2019}\natexlab{}.
\newblock \showarticletitle{Beyond the Selected Completely at Random Assumption for Learning from Positive and Unlabeled Data}. In \bibinfo{booktitle}{\emph{Machine Learning and Knowledge Discovery in Databases: European Conference (ECML PKDD)}}.
\newblock


\bibitem[Benedetto et~al\mbox{.}(2023)]%
        {Benedetto2023ContextualBias}
\bibfield{author}{\bibinfo{person}{Giuseppe~Di Benedetto}, \bibinfo{person}{Alexander Buchholz}, \bibinfo{person}{Ben London}, \bibinfo{person}{Matej Jakimov}, \bibinfo{person}{Yannik Stein}, \bibinfo{person}{Jan~Malte Lichtenberg}, \bibinfo{person}{Vito Bellini}, {and} \bibinfo{person}{Matteo Ruffini}.} \bibinfo{year}{2023}\natexlab{}.
\newblock \showarticletitle{Contextual Position Bias Estimation Using a Single Stochastic Logging Policy}.
\newblock  (\bibinfo{year}{2023}).
\newblock
\urldef\tempurl%
\url{https://www.amazon.science/publications/contextual-position-bias-estimation-using-a-single-stochastic-logging-policy}
\showURL{%
\tempurl}


\bibitem[Bradbury et~al\mbox{.}(2018)]%
        {Bradbury2018Jax}
\bibfield{author}{\bibinfo{person}{James Bradbury}, \bibinfo{person}{Roy Frostig}, \bibinfo{person}{Peter Hawkins}, \bibinfo{person}{Matthew~James Johnson}, \bibinfo{person}{Chris Leary}, \bibinfo{person}{Dougal Maclaurin}, \bibinfo{person}{George Necula}, \bibinfo{person}{Adam Paszke}, \bibinfo{person}{Jake Vander{P}las}, \bibinfo{person}{Skye Wanderman-{M}ilne}, {and} \bibinfo{person}{Qiao Zhang}.} \bibinfo{year}{2018}\natexlab{}.
\newblock \bibinfo{title}{{JAX}: Composable Transformations of {P}ython+{N}um{P}y Programs}.
\newblock
\newblock
\urldef\tempurl%
\url{http://github.com/google/jax}
\showURL{%
\tempurl}


\bibitem[Bruch et~al\mbox{.}(2019)]%
        {Bruch2019Softmax}
\bibfield{author}{\bibinfo{person}{Sebastian Bruch}, \bibinfo{person}{Xuanhui Wang}, \bibinfo{person}{Michael Bendersky}, {and} \bibinfo{person}{Marc Najork}.} \bibinfo{year}{2019}\natexlab{}.
\newblock \showarticletitle{An Analysis of the Softmax Cross Entropy Loss for Learning-to-Rank with Binary Relevance}. In \bibinfo{booktitle}{\emph{Proceedings of the 2019 ACM SIGIR International Conference on Theory of Information Retrieval (ICTIR)}}.
\newblock


\bibitem[Chapelle and Chang(2011)]%
        {Chapelle2011Yahoo}
\bibfield{author}{\bibinfo{person}{Olivier Chapelle} {and} \bibinfo{person}{Yi Chang}.} \bibinfo{year}{2011}\natexlab{}.
\newblock \showarticletitle{Yahoo! Learning to Rank Challenge Overview}.
\newblock \bibinfo{journal}{\emph{Journal of Machine Learning Research (JMLR)}}  \bibinfo{volume}{14} (\bibinfo{year}{2011}), \bibinfo{pages}{1--24}.
\newblock


\bibitem[Chen et~al\mbox{.}(2024)]%
        {Chen2024Identifiability}
\bibfield{author}{\bibinfo{person}{Mouxiang Chen}, \bibinfo{person}{Chenghao Liu}, \bibinfo{person}{Zemin Liu}, \bibinfo{person}{Zhuo Li}, {and} \bibinfo{person}{Jianling Sun}.} \bibinfo{year}{2024}\natexlab{}.
\newblock \showarticletitle{Identifiability Matters: Revealing the Hidden Recoverable Condition in Unbiased Learning to Rank}. In \bibinfo{booktitle}{\emph{The 41st International Conference on Machine Learning (ICML)}}.
\newblock


\bibitem[Chen et~al\mbox{.}(2023)]%
        {Chen2023-TENCENT-ULTR-1}
\bibfield{author}{\bibinfo{person}{Xiaoshu Chen}, \bibinfo{person}{Xiangsheng Li}, \bibinfo{person}{Kunliang Wei}, \bibinfo{person}{Bin Hu}, \bibinfo{person}{Lei Jiang}, \bibinfo{person}{Zeqian Huang}, {and} \bibinfo{person}{Zhanhui Kang}.} \bibinfo{year}{2023}\natexlab{}.
\newblock \showarticletitle{Multi-Feature Integration for Perception-Dependent Examination-Bias Estimation}. In \bibinfo{booktitle}{\emph{Proceedings of The Sixteen ACM International Conference on Web Search and Data Mining (WSDM)}}.
\newblock


\bibitem[Chuklin et~al\mbox{.}(2015)]%
        {Chuklin2015ClickModels}
\bibfield{author}{\bibinfo{person}{Aleksandr Chuklin}, \bibinfo{person}{Ilya Markov}, {and} \bibinfo{person}{Maarten de Rijke}.} \bibinfo{year}{2015}\natexlab{}.
\newblock \bibinfo{booktitle}{\emph{Click Models for Web Search}}.
\newblock \bibinfo{publisher}{Morgan \& Claypool}.
\newblock
\showISBNx{9781627056489}
\urldef\tempurl%
\url{https://doi.org/10.2200/S00654ED1V01Y201507ICR043}
\showDOI{\tempurl}


\bibitem[Craswell et~al\mbox{.}(2008)]%
        {Craswell2008PBM}
\bibfield{author}{\bibinfo{person}{Nick Craswell}, \bibinfo{person}{Onno Zoeter}, \bibinfo{person}{Michael Taylor}, {and} \bibinfo{person}{Bill Ramsey}.} \bibinfo{year}{2008}\natexlab{}.
\newblock \showarticletitle{An Experimental Comparison of Click Position-bias Models}. In \bibinfo{booktitle}{\emph{Proceedings of the 2008 International Conference on Web Search and Data Mining (WSDM)}}.
\newblock


\bibitem[D'Amour et~al\mbox{.}(2021)]%
        {DAmour2021Overlap}
\bibfield{author}{\bibinfo{person}{Alexander D'Amour}, \bibinfo{person}{Peng Ding}, \bibinfo{person}{Avi Feller}, \bibinfo{person}{Lihua Lei}, {and} \bibinfo{person}{Jasjeet Sekhon}.} \bibinfo{year}{2021}\natexlab{}.
\newblock \showarticletitle{Overlap in Observational Studies with High-dimensional Covariates}.
\newblock \bibinfo{journal}{\emph{Journal of Econometrics}} \bibinfo{volume}{221}, \bibinfo{number}{2} (\bibinfo{year}{2021}), \bibinfo{pages}{644--654}.
\newblock
\showISSN{0304-4076}
\urldef\tempurl%
\url{https://doi.org/10.1016/j.jeconom.2019.10.014}
\showDOI{\tempurl}


\bibitem[Dato et~al\mbox{.}(2016)]%
        {Dato2016Istella}
\bibfield{author}{\bibinfo{person}{Domenico Dato}, \bibinfo{person}{Claudio Lucchese}, \bibinfo{person}{Franco~Maria Nardini}, \bibinfo{person}{Salvatore Orlando}, \bibinfo{person}{Raffaele Perego}, \bibinfo{person}{Nicola Tonellotto}, {and} \bibinfo{person}{Rossano Venturini}.} \bibinfo{year}{2016}\natexlab{}.
\newblock \showarticletitle{Fast Ranking with Additive Ensembles of Oblivious and Non-Oblivious Regression Trees}.
\newblock \bibinfo{journal}{\emph{ACM Transactions on Information Systems (TOIS)}} \bibinfo{volume}{35}, \bibinfo{number}{2}, Article \bibinfo{articleno}{15} (\bibinfo{year}{2016}).
\newblock


\bibitem[Dato et~al\mbox{.}(2022)]%
        {Dato2022Istella22}
\bibfield{author}{\bibinfo{person}{Domenico Dato}, \bibinfo{person}{Sean MacAvaney}, \bibinfo{person}{Franco~Maria Nardini}, \bibinfo{person}{Raffaele Perego}, {and} \bibinfo{person}{Nicola Tonellotto}.} \bibinfo{year}{2022}\natexlab{}.
\newblock \showarticletitle{The Istella22 Dataset: Bridging Traditional and Neural Learning to Rank Evaluation}. In \bibinfo{booktitle}{\emph{The 45th International ACM SIGIR Conference on Research and Development in Information Retrieval (SIGIR)}}.
\newblock


\bibitem[Deffayet et~al\mbox{.}(2023a)]%
        {Deffayet2023CMIP}
\bibfield{author}{\bibinfo{person}{Romain Deffayet}, \bibinfo{person}{Philipp Hager}, \bibinfo{person}{Jean-Michel Renders}, {and} \bibinfo{person}{Maarten de Rijke}.} \bibinfo{year}{2023}\natexlab{a}.
\newblock \showarticletitle{An Offline Metric for the Debiasedness of Click Models}. In \bibinfo{booktitle}{\emph{Proceedings of the 46th International ACM SIGIR Conference on Research and Development in Information Retrieval (SIGIR)}}.
\newblock


\bibitem[Deffayet et~al\mbox{.}(2023b)]%
        {Deffayet2023Robustness}
\bibfield{author}{\bibinfo{person}{Romain Deffayet}, \bibinfo{person}{Jean-Michel Renders}, {and} \bibinfo{person}{Maarten De~Rijke}.} \bibinfo{year}{2023}\natexlab{b}.
\newblock \showarticletitle{Evaluating the Robustness of Click Models to Policy Distributional Shift}.
\newblock \bibinfo{journal}{\emph{ACM Transactions on Information Systems (TOIS)}} \bibinfo{volume}{41}, \bibinfo{number}{4}, Article \bibinfo{articleno}{84} (\bibinfo{year}{2023}).
\newblock


\bibitem[Fang et~al\mbox{.}(2019)]%
        {Fang2019InterventionHarvesting}
\bibfield{author}{\bibinfo{person}{Zhichong Fang}, \bibinfo{person}{Aman Agarwal}, {and} \bibinfo{person}{Thorsten Joachims}.} \bibinfo{year}{2019}\natexlab{}.
\newblock \showarticletitle{Intervention Harvesting for Context-Dependent Examination-Bias Estimation}. In \bibinfo{booktitle}{\emph{The 42nd International ACM SIGIR Conference on Research and Development in Information Retrieval (SIGIR)}}.
\newblock


\bibitem[Guo et~al\mbox{.}(2019)]%
        {Guo2019PAL}
\bibfield{author}{\bibinfo{person}{Huifeng Guo}, \bibinfo{person}{Jinkai Yu}, \bibinfo{person}{Qing Liu}, \bibinfo{person}{Ruiming Tang}, {and} \bibinfo{person}{Yuzhou Zhang}.} \bibinfo{year}{2019}\natexlab{}.
\newblock \showarticletitle{PAL: A Position-bias Aware Learning Framework for CTR Prediction in Live Recommender Systems}. In \bibinfo{booktitle}{\emph{Proceedings of the 13th ACM Conference on Recommender Systems (RecSys)}}.
\newblock


\bibitem[Hager et~al\mbox{.}(2023)]%
        {Hager2023ClickModelIPS}
\bibfield{author}{\bibinfo{person}{Philipp Hager}, \bibinfo{person}{Maarten de Rijke}, {and} \bibinfo{person}{Onno Zoeter}.} \bibinfo{year}{2023}\natexlab{}.
\newblock \showarticletitle{Contrasting Neural Click Models and Pointwise IPS Rankers}. In \bibinfo{booktitle}{\emph{The 45th European Conference on Information Retrieval (ECIR)}}.
\newblock


\bibitem[Hager et~al\mbox{.}(2025)]%
        {Hager2025TwoTowers}
\bibfield{author}{\bibinfo{person}{Philipp Hager}, \bibinfo{person}{Onno Zoeter}, {and} \bibinfo{person}{Maarten de Rijke}.} \bibinfo{year}{2025}\natexlab{}.
\newblock \showarticletitle{Unidentified and Confounded? Understanding Two-Tower Models for Unbiased Learning to Rank}. In \bibinfo{booktitle}{\emph{The International ACM SIGIR Conference on Innovative Concepts and Theories in Information Retrieval (ICTIR)}}.
\newblock


\bibitem[Haldar et~al\mbox{.}(2020)]%
        {Haldar2020Airbnb}
\bibfield{author}{\bibinfo{person}{Malay Haldar}, \bibinfo{person}{Prashant Ramanathan}, \bibinfo{person}{Tyler Sax}, \bibinfo{person}{Mustafa Abdool}, \bibinfo{person}{Lanbo Zhang}, \bibinfo{person}{Aamir Mansawala}, \bibinfo{person}{Shulin Yang}, \bibinfo{person}{Bradley Turnbull}, {and} \bibinfo{person}{Junshuo Liao}.} \bibinfo{year}{2020}\natexlab{}.
\newblock \showarticletitle{Improving Deep Learning for Airbnb Search}. In \bibinfo{booktitle}{\emph{The 26th ACM SIGKDD International Conference on Knowledge Discovery and Data Mining (WSDM)}}.
\newblock


\bibitem[Heek et~al\mbox{.}(2024)]%
        {Heek2020Flax}
\bibfield{author}{\bibinfo{person}{Jonathan Heek}, \bibinfo{person}{Anselm Levskaya}, \bibinfo{person}{Avital Oliver}, \bibinfo{person}{Marvin Ritter}, \bibinfo{person}{Bertrand Rondepierre}, \bibinfo{person}{Andreas Steiner}, {and} \bibinfo{person}{Marc van {Z}ee}.} \bibinfo{year}{2024}\natexlab{}.
\newblock \bibinfo{booktitle}{\emph{{F}lax: A Neural Network Library and Ecosystem for {JAX}}}.
\newblock
\urldef\tempurl%
\url{http://github.com/google/flax}
\showURL{%
\tempurl}


\bibitem[Jagerman et~al\mbox{.}(2019)]%
        {Jagerman2019ModelIntervene}
\bibfield{author}{\bibinfo{person}{Rolf Jagerman}, \bibinfo{person}{Harrie Oosterhuis}, {and} \bibinfo{person}{Maarten de Rijke}.} \bibinfo{year}{2019}\natexlab{}.
\newblock \showarticletitle{To Model or to Intervene: A Comparison of Counterfactual and Online Learning to Rank from User Interactions}. In \bibinfo{booktitle}{\emph{The International ACM SIGIR Conference on Research and Development in Information Retrieval (SIGIR)}}.
\newblock


\bibitem[Jagerman et~al\mbox{.}(2022)]%
        {Jagerman2022Rax}
\bibfield{author}{\bibinfo{person}{Rolf Jagerman}, \bibinfo{person}{Xuanhui Wang}, \bibinfo{person}{Honglei Zhuang}, \bibinfo{person}{Zhen Qin}, \bibinfo{person}{Michael Bendersky}, {and} \bibinfo{person}{Marc Najork}.} \bibinfo{year}{2022}\natexlab{}.
\newblock \showarticletitle{Rax: Composable Learning-to-Rank Using JAX}. In \bibinfo{booktitle}{\emph{Proceedings of the 28th ACM SIGKDD Conference on Knowledge Discovery and Data Mining (KDD)}}.
\newblock


\bibitem[Joachims et~al\mbox{.}(2017)]%
        {Joachims2017IPW}
\bibfield{author}{\bibinfo{person}{Thorsten Joachims}, \bibinfo{person}{Adith Swaminathan}, {and} \bibinfo{person}{Tobias Schnabel}.} \bibinfo{year}{2017}\natexlab{}.
\newblock \showarticletitle{Unbiased Learning-to-Rank with Biased Feedback}. In \bibinfo{booktitle}{\emph{Proceedings of the Tenth ACM International Conference on Web Search and Data Mining (WSDM)}}.
\newblock


\bibitem[Khrylchenko and Fritzler(2023)]%
        {Khrylchenko2023Yandex}
\bibfield{author}{\bibinfo{person}{Kirill Khrylchenko} {and} \bibinfo{person}{Alexander Fritzler}.} \bibinfo{year}{2023}\natexlab{}.
\newblock \bibinfo{title}{Personalized Transformer-based Ranking for e-Commerce at Yandex}.
\newblock
\newblock
\showeprint[arxiv]{2310.03481}~[cs.IR]
\urldef\tempurl%
\url{https://arxiv.org/abs/2310.03481}
\showURL{%
\tempurl}


\bibitem[Koller and Friedman(2009)]%
        {Koller2009PGM}
\bibfield{author}{\bibinfo{person}{Daphne Koller} {and} \bibinfo{person}{Nir Friedman}.} \bibinfo{year}{2009}\natexlab{}.
\newblock \bibinfo{booktitle}{\emph{Probabilistic Graphical Models: Principles and Techniques}}.
\newblock \bibinfo{publisher}{MIT press}.
\newblock


\bibitem[Lewbel(2019)]%
        {Lewbel2019IdentificationZoo}
\bibfield{author}{\bibinfo{person}{Arthur Lewbel}.} \bibinfo{year}{2019}\natexlab{}.
\newblock \showarticletitle{The Identification Zoo: Meanings of Identification in Econometrics}.
\newblock \bibinfo{journal}{\emph{Journal of Economic Literature}} \bibinfo{volume}{57}, \bibinfo{number}{4} (\bibinfo{year}{2019}), \bibinfo{pages}{835--903}.
\newblock


\bibitem[Li et~al\mbox{.}(2018)]%
        {Li2018OffPolicyClickModels}
\bibfield{author}{\bibinfo{person}{Shuai Li}, \bibinfo{person}{Yasin Abbasi-Yadkori}, \bibinfo{person}{Branislav Kveton}, \bibinfo{person}{S. Muthukrishnan}, \bibinfo{person}{Vishwa Vinay}, {and} \bibinfo{person}{Zheng Wen}.} \bibinfo{year}{2018}\natexlab{}.
\newblock \showarticletitle{Offline Evaluation of Ranking Policies with Click Models}. In \bibinfo{booktitle}{\emph{The 24th ACM SIGKDD International Conference on Knowledge Discovery and Data Mining (KDD)}}.
\newblock


\bibitem[Loshchilov and Hutter(2019)]%
        {Loshchilov2019AdamW}
\bibfield{author}{\bibinfo{person}{Ilya Loshchilov} {and} \bibinfo{person}{Frank Hutter}.} \bibinfo{year}{2019}\natexlab{}.
\newblock \showarticletitle{Decoupled Weight Decay Regularization}.
\newblock  (\bibinfo{year}{2019}).
\newblock
\showeprint[arxiv]{1711.05101}~[cs.LG]


\bibitem[Luo et~al\mbox{.}(2024)]%
        {Luo2024UnbiasedPropensity}
\bibfield{author}{\bibinfo{person}{Dan Luo}, \bibinfo{person}{Lixin Zou}, \bibinfo{person}{Qingyao Ai}, \bibinfo{person}{Zhiyu Chen}, \bibinfo{person}{Chenliang Li}, \bibinfo{person}{Dawei Yin}, {and} \bibinfo{person}{Brian~D. Davison}.} \bibinfo{year}{2024}\natexlab{}.
\newblock \showarticletitle{Unbiased Learning-to-Rank Needs Unconfounded Propensity Estimation}. In \bibinfo{booktitle}{\emph{The 47th International ACM SIGIR Conference on Research and Development in Information Retrieval (SIGIR)}}.
\newblock


\bibitem[Maddison et~al\mbox{.}(2017)]%
        {Maddison2017Concrete}
\bibfield{author}{\bibinfo{person}{Chris~J. Maddison}, \bibinfo{person}{Andriy Mnih}, {and} \bibinfo{person}{Yee~Whye Teh}.} \bibinfo{year}{2017}\natexlab{}.
\newblock \showarticletitle{The Concrete Distribution: A Continuous Relaxation of Discrete Random Variables}. In \bibinfo{booktitle}{\emph{The International Conference on Learning Representations (ICLR)}}.
\newblock


\bibitem[Oosterhuis(2022)]%
        {Oosterhuis2022Limitations}
\bibfield{author}{\bibinfo{person}{Harrie Oosterhuis}.} \bibinfo{year}{2022}\natexlab{}.
\newblock \showarticletitle{Reaching the End of Unbiasedness: Uncovering Implicit Limitations of Click-Based Learning to Rank}. In \bibinfo{booktitle}{\emph{Proceedings of the 2022 ACM SIGIR International Conference on Theory of Information Retrieval (SIGIR)}}.
\newblock


\bibitem[Oosterhuis and de~Rijke(2020)]%
        {Oosterhuis2020PolicyAware}
\bibfield{author}{\bibinfo{person}{Harrie Oosterhuis} {and} \bibinfo{person}{Maarten de Rijke}.} \bibinfo{year}{2020}\natexlab{}.
\newblock \showarticletitle{Policy-Aware Unbiased Learning to Rank for Top-k Rankings}. In \bibinfo{booktitle}{\emph{The International ACM SIGIR Conference on Research and Development in Information Retrieval (SIGIR)}}.
\newblock


\bibitem[Qin and Liu(2013)]%
        {Qin2013MSLR}
\bibfield{author}{\bibinfo{person}{Tao Qin} {and} \bibinfo{person}{Tie-Yan Liu}.} \bibinfo{year}{2013}\natexlab{}.
\newblock \bibinfo{title}{Introducing LETOR 4.0 Datasets}.
\newblock
\newblock
\showeprint[arxiv]{1306.2597}~[cs.IR]


\bibitem[Qin et~al\mbox{.}(2021)]%
        {Qin2021AreNeuralRankers}
\bibfield{author}{\bibinfo{person}{Zhen Qin}, \bibinfo{person}{Le Yan}, \bibinfo{person}{Honglei Zhuang}, \bibinfo{person}{Yi Tay}, \bibinfo{person}{Rama~Kumar Pasumarthi}, \bibinfo{person}{Xuanhui Wang}, \bibinfo{person}{Mike Bendersky}, {and} \bibinfo{person}{Marc Najork}.} \bibinfo{year}{2021}\natexlab{}.
\newblock \showarticletitle{Are Neural Rankers still Outperformed by Gradient Boosted Decision Trees?}. In \bibinfo{booktitle}{\emph{International Conference on Learning Representations (ICLR)}}.
\newblock


\bibitem[Radlinski and Joachims(2006)]%
        {Radlinski2006FairPairs}
\bibfield{author}{\bibinfo{person}{Filip Radlinski} {and} \bibinfo{person}{Thorsten Joachims}.} \bibinfo{year}{2006}\natexlab{}.
\newblock \showarticletitle{Minimally Invasive Randomization for Collecting Unbiased Preferences from Clickthrough Logs}. In \bibinfo{booktitle}{\emph{The 21st National Conference on Artificial Intelligence (AAAI)}}.
\newblock


\bibitem[Saito et~al\mbox{.}(2020)]%
        {Saito2020PointwiseIPS}
\bibfield{author}{\bibinfo{person}{Yuta Saito}, \bibinfo{person}{Suguru Yaginuma}, \bibinfo{person}{Yuta Nishino}, \bibinfo{person}{Hayato Sakata}, {and} \bibinfo{person}{Kazuhide Nakata}.} \bibinfo{year}{2020}\natexlab{}.
\newblock \showarticletitle{Unbiased Recommender Learning from Missing-Not-At-Random Implicit Feedback}. In \bibinfo{booktitle}{\emph{International Conference on Web Search and Data Mining (WSDM)}}.
\newblock


\bibitem[Watkins(1989)]%
        {Watkins1989EGreedy}
\bibfield{author}{\bibinfo{person}{Christopher John Cornish~Hellaby Watkins}.} \bibinfo{year}{1989}\natexlab{}.
\newblock \emph{\bibinfo{title}{Learning from Delayed Rewards}}.
\newblock \bibinfo{thesistype}{Ph.\,D. Dissertation}. \bibinfo{school}{University of Cambridge}.
\newblock


\bibitem[Wilms et~al\mbox{.}(2021)]%
        {Wilms2021OmittedVariableBias}
\bibfield{author}{\bibinfo{person}{Rafael Wilms}, \bibinfo{person}{Eveline Mäthner}, \bibinfo{person}{Lothar Winnen}, {and} \bibinfo{person}{Ralf Lanwehr}.} \bibinfo{year}{2021}\natexlab{}.
\newblock \showarticletitle{Omitted Variable Bias: A Threat to Estimating Causal Relationships}.
\newblock \bibinfo{journal}{\emph{Methods in Psychology}}  \bibinfo{volume}{5} (\bibinfo{year}{2021}), \bibinfo{pages}{100075}.
\newblock
\showISSN{2590-2601}
\urldef\tempurl%
\url{https://doi.org/10.1016/j.metip.2021.100075}
\showDOI{\tempurl}


\bibitem[Yan et~al\mbox{.}(2022)]%
        {Yan2022TwoTowers}
\bibfield{author}{\bibinfo{person}{Le Yan}, \bibinfo{person}{Zhen Qin}, \bibinfo{person}{Honglei Zhuang}, \bibinfo{person}{Xuanhui Wang}, \bibinfo{person}{Michael Bendersky}, {and} \bibinfo{person}{Marc Najork}.} \bibinfo{year}{2022}\natexlab{}.
\newblock \showarticletitle{Revisiting Two-tower Models for Unbiased Learning to Rank}. In \bibinfo{booktitle}{\emph{Proceedings of the 45th International ACM SIGIR Conference on Research and Development in Information Retrieval (SIGIR)}}.
\newblock


\bibitem[Zhang et~al\mbox{.}(2023)]%
        {Zhang2023Disentangling}
\bibfield{author}{\bibinfo{person}{Yunan Zhang}, \bibinfo{person}{Le Yan}, \bibinfo{person}{Zhen Qin}, \bibinfo{person}{Honglei Zhuang}, \bibinfo{person}{Jiaming Shen}, \bibinfo{person}{Xuanhui Wang}, \bibinfo{person}{Michael Bendersky}, {and} \bibinfo{person}{Marc Najork}.} \bibinfo{year}{2023}\natexlab{}.
\newblock \showarticletitle{Towards Disentangling Relevance and Bias in Unbiased Learning to Rank}. In \bibinfo{booktitle}{\emph{The 29th ACM SIGKDD Conference on Knowledge Discovery and Data Mining (KDD)}}.
\newblock


\bibitem[Zhao et~al\mbox{.}(2019)]%
        {Zhao2019AdditiveTowers}
\bibfield{author}{\bibinfo{person}{Zhe Zhao}, \bibinfo{person}{Lichan Hong}, \bibinfo{person}{Li Wei}, \bibinfo{person}{Jilin Chen}, \bibinfo{person}{Aniruddh Nath}, \bibinfo{person}{Shawn Andrews}, \bibinfo{person}{Aditee Kumthekar}, \bibinfo{person}{Maheswaran Sathiamoorthy}, \bibinfo{person}{Xinyang Yi}, {and} \bibinfo{person}{Ed Chi}.} \bibinfo{year}{2019}\natexlab{}.
\newblock \showarticletitle{Recommending What Video to Watch Next: A Multitask Ranking System}. In \bibinfo{booktitle}{\emph{Proceedings of the 13th ACM Conference on Recommender Systems (RecSys)}}.
\newblock


\end{thebibliography}

\clearpage
\appendix

\section{Identification through overlapping features}
\label{appendix:overlapping-features}

\begin{theorem}[Identifiability through feature overlap]
    \label{theorem:feature-overlap}
    Let $G = (V, E)$ be an undirected graph where vertices $V$ correspond to positions and an edge $(k, k') \in E$ exists if the feature support between positions $k$ and $k'$ overlap:
    \begin{equation}
        \label{eq:identifiability-features}
           \begin{split}
            \supp{P(x \mid k)} &\cap \supp{P(x \mid k')} \neq \emptyset.
           \end{split}
    \end{equation}
    If $G$ is connected and the relevance tower $r(\cdot)$ is continuous, then the additive two-tower model is approximately identifiable up to an additive constant.
    \end{theorem}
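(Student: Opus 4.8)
The plan is to lift the graph-connectivity argument of \cite[Theorem 1]{Chen2024Identifiability} from shared query--document pairs to shared feature vectors. Suppose two parameter configurations $(\theta, r)$ and $(\theta', r')$ induce the same click distribution. For each position $k$, equality of click probabilities gives $\sigma(\theta_k + r(x)) = \sigma(\theta'_k + r'(x))$ for $P(x \mid k)$-almost every $x$, and since $\sigma$ is strictly increasing (hence injective) this is equivalent to $\theta_k + r(x) = \theta'_k + r'(x)$. Rearranging, the difference $\delta(x) \equiv r(x) - r'(x)$ equals the constant $\Delta_k \equiv \theta'_k - \theta_k$ for $P(x \mid k)$-almost every $x$.

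Next I would use continuity. Both towers are continuous, so $\delta$ is continuous; since it agrees with $\Delta_k$ on a set that is dense in $\supp{P(x \mid k)}$, it equals $\Delta_k$ on all of $\supp{P(x \mid k)}$, boundary included. This is precisely where the overlap hypothesis becomes usable even if two supports meet only at a measure-zero boundary point: choosing any $x^\star \in \supp{P(x \mid k)} \cap \supp{P(x \mid k')}$ gives $\Delta_k = \delta(x^\star) = \Delta_{k'}$. Propagating this along every edge and invoking connectedness of $G$, all offsets collapse to one global constant $\Delta$, so $\theta'_k = \theta_k - \Delta$ for all $k$ and $r'(x) = r(x) + \Delta$ for all $x \in \bigcup_k \supp{P(x \mid k)}$. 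Imposing the usual normalization (e.g.\ $\theta_1 = 0$) fixes $\Delta$, yielding identifiability up to an additive constant.

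The word ``approximately'' is what I would take most care over. The argument constrains $r$ only on the observed feature support $\bigcup_k \supp{P(x \mid k)}$; elsewhere $r$ and $r'$ may disagree freely, so what is identified is the restriction of the relevance tower to the data support. For a quantitative version I would relax exact overlap to $\varepsilon$-overlap (feature vectors within distance $\varepsilon$ appearing at both ranks) and bound the residual mismatch between the two parameter sets by the modulus of continuity $\omega_r(\varepsilon)$, which makes the sense of ``approximate'' explicit and degrades gracefully to exact identifiability as $\varepsilon \to 0$.

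The main obstacle is the continuity step and the accompanying care in the statement: in the discrete setting of \cite{Chen2024Identifiability} a shared pair is literally observed, whereas here a shared boundary feature can carry zero probability, so the entire strength of the result rests on continuity bridging that gap and on being explicit that identifiability holds on the support and only ``approximately'' when overlap is merely near-exact.
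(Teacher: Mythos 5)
Your proposal is correct and follows the same overall skeleton as the paper's proof: assume two parameterizations induce the same click probabilities, deduce a rank-dependent offset $\Delta_k = r(x) - r'(x)$ on each position's feature support, use overlap plus continuity to equate offsets of adjacent positions, and propagate along the connected graph before normalizing. The one substantive difference is how the overlap-plus-continuity step is executed. The paper reads the overlap condition loosely as the existence of $x_1 \sim P(x \mid k)$ and $x_2 \sim P(x \mid k')$ with $x_1 \approx x_2$, and bounds $|\Delta_k - \Delta_{k'}| \leq 2L \cdot \|x_1 - x_2\|_2$ under a shared Lipschitz constant; the residual error accumulated along paths is what the word ``approximately'' in the theorem refers to. You instead take the stated hypothesis $\supp{P(x \mid k)} \cap \supp{P(x \mid k')} \neq \emptyset$ literally: a full-measure set is dense in the support, so continuity upgrades the almost-everywhere identity $\delta(x) = \Delta_k$ to the entire closed support, and any common point $x^\star$ then forces $\Delta_k = \Delta_{k'}$ exactly. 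This buys an exact (not approximate) conclusion under the theorem's literal hypothesis, and it closes a small gap in the paper's argument, which evaluates $\delta$ at specific points $x_1, x_2$ without justifying that the pointwise identity holds there rather than merely almost everywhere. Conversely, the paper's Lipschitz bound is what degrades gracefully when supports only nearly intersect, which is the practically relevant regime discussed in Appendix~\ref{appendix:practical-pitfalls}; your proposed $\varepsilon$-overlap refinement with a modulus of continuity recovers exactly that quantitative content, so the two arguments are complementary rather than in conflict. Your observation that $r$ is pinned down only on $\bigcup_k \supp{P(x \mid k)}$ is a correct caveat that the paper leaves implicit.
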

    
    \begin{proof}
    Consider the alternative parameterization $\theta'_k$ and $r'(\cdot)$ that yield identical click probabilities to our original parameters $\theta_k$ and $r(\cdot)$. For these parameterizations to produce the same click probabilities, the following must hold:
    \begin{equation}\label{eq:feature_deltas}
        r(x_{q,d}) - r'(x_{q,d}) = \theta'_{k} - \theta_{k} = \Delta_k,
    \end{equation}
    where $\Delta_k$ is a rank-dependent offset. For any two positions $k$ and $k'$ that share overlapping feature support, there exist query-document pairs with feature vectors $x_1 \sim P(x|k)$ and $x_2 \sim P(x|k')$ such that $x_1 \approx x_2$. By the continuity of $r(\cdot)$ and $r'(\cdot)$, we have:
    \begin{equation}
        r(x_1) - r'(x_1) = \Delta_k \quad \text{and} \quad r(x_2) - r'(x_2) = \Delta_{k'}.
    \end{equation}
    Assuming that $r(\cdot)$ and $r'(\cdot)$ share a Lipschitz constant $L$, we can bound the difference in parameter offsets between positions as:
    \begin{align}
    |\Delta_k - \Delta_{k'}| \leq 2L \cdot \|x_1 - x_2\|_2.
    \end{align} 
    This bound shows that when feature vectors are similar between positions, their parameter offsets must also be similar. As the feature overlap increases (i.e., as $\|x_1 - x_2\|_2 \to 0$), the difference in offsets approaches zero. Given that the graph $G$ is connected, there exists a path between any two positions $k$ and $k'$. Along this path, the parameter offsets between adjacent positions are approximately equal due to the continuity constraint. By transitivity across the connected graph, all position-dependent offsets $\Delta_k$ must converge to a single global offset $\Delta$ up to an approximation error that diminishes as feature overlap increases. After normalization (e.g., setting $\theta_1 = 0$), the model parameters are uniquely identified.
\end{proof}

\section{Practical pitfalls of overlapping features}
\label{appendix:practical-pitfalls}

The proof in Appendix~\ref{appendix:overlapping-features} makes two key assumptions for identifiability that can be difficult in practice: overlapping document features and a continuous relevance model. First, feature overlap decreases with increasing dimensionality~\cite{DAmour2021Overlap}. That is, as we add more query-document features, documents become less likely to be sufficiently close in feature space. Therefore, we should aim to use fewer query-document features, use dimensionality reduction methods, or introduce document swaps to guarantee overlapping support.

Secondly, the continuity assumption requires that small differences in feature space do not cause large, discontinuous jumps in relevance predictions. While neural networks are continuous, deep networks can produce large jumps in relevance prediction even for minor feature differences. Thus, when randomized data is unavailable, it is advisable to use shallow neural networks, regularization, or making parametric assumptions when possible to limit the expressiveness of the relevance model. Note that this advice may conflict with our discussion of model misspecification in Section~\ref{sec:logging_policy}.

\section{Experimental Setup}
\label{appendix:experimental-setup}

To evaluate our theoretical claims about model identifiability and logging policy effects, we implemented a comprehensive simulation framework. This controlled setting allows us to measure how well two-tower models can recover their underlying parameters under various conditions. Rather than solely relying on ranking metrics (such as nDCG), which can obscure underlying parameter estimation issues~\cite{Deffayet2023Robustness,Deffayet2023CMIP}, we directly compare inferred bias parameters against their ground-truth values to assess model identifiability. This provides clearer insights, as correctly estimated position bias parameters indicate properly identified relevance parameters due to their complementary relationship. By controlling logging policy strength, position randomization levels, and model specification, we will test each component of our theoretical analysis in isolation.

\paragraph{Datasets} The basis for our simulations are traditional learning-to-rank datasets with query-document features with an expert-annotated relevance label between 0-4. While we considered multiple datasets~\cite{Chapelle2011Yahoo,Dato2016Istella,Dato2022Istella22}, we will focus the discussion in the rest of the paper solely on MSLR30K~\cite{Qin2013MSLR} from the Bing search engine (31,531 queries and 136 dimensional feature vectors). We make this choice as our experiments consider very idealized simulations in which the choice of base dataset barely matters. We use the official training, validation, and testing splits. To increase computational efficiency, we truncate the number of documents per query to the 25 most relevant documents and drop all queries without a single relevant document ($\approx 3.1\%$ of queries). These steps purely aid the scalability of our simulation and all of our findings hold up on the full dataset. Next, we normalize each query-document feature provided using $\text{log1p}(x)=\ln(1 + |x|) \odot \text{sign}(x)$ following \citet{Qin2021AreNeuralRankers}.

\paragraph{Synthetic relevance} \label{sec:synthetic-relevance} LTR datasets come with relevance labels obtained by human experts~\cite{Qin2013MSLR,Chapelle2011Yahoo}. To isolate the effect of model mismatch, we additionally generate synthetic relevance labels that follow a known model class based on the provided query-document features. We generate a linear relevance label with Gaussian noise:
\begin{equation}\label{eq:linear-relevance}
    \hat{\gamma}_{q,d} = w^T x_{q,d} + \xi_{q,d}, \quad \xi_{q,d} \sim \mathcal{N}(0, \sigma^2),
\end{equation}
and a non-linear relevance label using a two-layer neural network with 16 neurons and tanh activations:
\begin{equation}\label{eq:non-linear-relevance}
    \hat{\gamma}_{q,d} = W_2^T \text{tanh}(W_1^T x_{q,d} + b_1) + b_2 + \xi_{q,d}, \quad \xi_{q,d} \sim \mathcal{N}(0, \sigma^2).
\end{equation}
In all our experiments we initialize the model weights randomly and add Gaussian noise with standard deviation $\sigma = 0.2$ to each label. Finally, we apply min-max scaling to ensure the resulting labels are in a comparable range to expert annotations, mapping the 5th to 95th percentiles of the synthetic labels to the range [0,4].

\paragraph{Logging policy} We isolate ranking performance and positional variability in our logging policy. To create a strong pointwise ranker, we train a relevance tower on the complete ground-truth relevance annotations of the training set with a mean-squared error loss. Note that we intentionally train our logging policy on the full training dataset and do not adhere to the common ULTR practice of training a weak ranker on 1\% of the dataset~\cite{Joachims2017IPW,Ai2018DLA,Jagerman2019ModelIntervene,Oosterhuis2020PolicyAware} as we need the strongest policy that our pointwise two-tower models could still capture. Our final logging policy score $s_{q,d}$ interpolates between our logging policy predictions and random noise:
\begin{equation}
\mbox{}\hspace*{-2mm}
    s_{q,d} = \operatorname{sign}(\alpha) \left(|\alpha| \cdot \hat{\gamma}_{q,d} + (1 -  |\alpha|) \cdot u_{q,d} \right), \ u_{q,d} \sim U(0, 4),
\hspace*{-2mm}\mbox{}    
\end{equation}
where $\alpha \in [-1, 1]$ is a hyperparameter to set the logging policy \emph{strength}. We simulate three levels of policy strength, using $\alpha = 1$ for the best possible ranking, $\alpha = 0$ for a random ranking, and $\alpha = -1$ for an inverse ranking that is worse than random by ranking documents from least to most relevant. Note that we sample noise only once per query-document pair and otherwise keep the ranking fixed across all user sessions for the same query. The noise interpolation setup follows \citet{Zhang2023Disentangling} but does not introduce additional biases as we sort by a trained logging policy and not directly by ground-truth expert labels.

Next, we transform our deterministic logging policy into a stochastic policy to introduce varying levels of positional variability into our simulation process, a crucial component to create overlap for our identifiability experiments. While various approaches exist to create stochastic policies, including the Gumbel-Max trick~\cite{Maddison2017Concrete,Bruch2019Softmax} or randomization schemes from the position bias literature \cite{Joachims2017IPW,Radlinski2006FairPairs}, we opt for an epsilon greedy strategy for simplicity and interpretability~\cite{Watkins1989EGreedy}. With this approach, we show a uniform random ranking with probability $\tau$, and otherwise rank deterministically according to our logging policy scores $s_{q,d}$. We refer to this probability $\tau$ as the \emph{temperature} of our logging policy, with $\tau = 0$ indicating a deterministic policy (showing the same ranking across all sessions) and $\tau = 1$ indicating a uniform random ranking per session.

\paragraph{Click simulation} We simulate user clicks with position bias following the two-tower model and leave an investigation of user model mismatch to future work. We define position bias logits as: $\hat{\theta}_k = - \ln(k)$ where $k$ is the rank of the current query-document pair. The final click probability is: $P(C = 1 \mid q, d, k) = \sigma(\hat{\theta}_k + (\hat{\gamma}_{q,d} - 2))$, with our final relevance logits being a zero-centered version of our relevance labels obtained earlier. We center the scores from their original range between [0, 4] to [-2, 2] to avoid click logits that might overly saturate the sigmoid. For all experiments, we simulate 1M user sessions for training and 500K sessions for validation and testing respectively. We repeat all simulations over three random seeds and plot the 95\% confidence interval across all figures.

\paragraph{Model} Lastly, we describe our model architecture. Our bias tower uses a single parameter per rank as the position bias logit $\theta_k$. Our relevance tower is either a single embedding parameter for each query-document pair $\gamma_{q,d}$, a linear model, or a two-layer feed-forward neural network (2 layers, 32 neurons, ELU activations). All models were optimized using AdamW~\cite{Loshchilov2019AdamW} with a learning rate of 0.003 and a weight decay of 0.01 up to 50 epochs, stopping early after three epochs of no improvement of the validation loss. The entire setup was implemented in Jax~\cite{Bradbury2018Jax}, Flax (NNX)~\cite{Heek2020Flax}, and Rax~\cite{Jagerman2022Rax}. Our code, data, and results are available here: \url{https://github.com/philipphager/two-tower-confounding}.

\section{Motivating example without expert policy}
\label{appendix:motivating-example}

We revisit our motivating example from the introduction and replace the expert policy (sorting by ground truth annotations) with a deep neural network trained on expert annotations. Removing this problematic source of model misspecification from our simulation greatly reduces the observed  logging policy impact.

\begin{figure}[H]
    \hspace*{-2em}
    \includegraphics[clip, trim=0mm -18mm 0mm 0mm, width=0.9\linewidth]{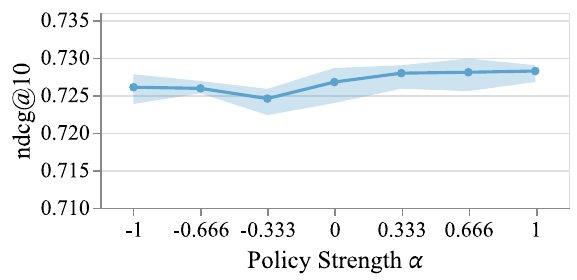}
    \vspace*{-1.4cm}
    \caption{Two-tower models trained on deterministic logging
    policies of varying strengths $\alpha$. The logging policy is a deep relevance tower trained on expert annotations instead of directly sorting by ground-truth relevance labels, as done in the introduction. This showcases the importance of model misspecification in ranking performance.}
    \label{fig:example-deep-policy}
\end{figure}

\section{Sample Weights for Misspecified Models}
\label{appendix:ips}

While the focus of this work is analytical, our findings on logging policy effects suggest a natural fix for training two-tower models. The priority should be to fix model misspecification, as this lowers the impact of unequal exposure through the logging policy. However, should that not be possible (e.g., because business rules that impacted our past rankings were not documented), we can reduce the impact of uneven document distributions by weighting each query-document pair inversely to its propensity of being displayed:
\begin{equation}
\mbox{}\hspace*{-2mm}
    \begin{split}
        \mathcal{\hat{L}_{\text{IPS}}}(\theta, \gamma: D) = - \frac{1}{N} \sum_{(q,d,k,c) \in \mathcal{D}} \frac{1}{\pi(d,k\mid q)} \bigl[c\log \sigma(\theta_k + \gamma_{q,d}) \\
        + (1-c)\log (1-\sigma(\theta_k + \gamma_{q,d}))\bigr].
    \end{split}
\hspace*{-2mm}\mbox{}    
\end{equation}
Note how this approach is different from traditional inverse propensity scoring (IPS) methods used in unbiased learning-to-rank~\cite{Joachims2017IPW,Oosterhuis2020PolicyAware} as we target different biases: existing pointwise IPS methods primarily correct for position bias~\cite{Bekker2019PointwiseIPS,Saito2020PointwiseIPS,Hager2023ClickModelIPS}. In contrast, this sample weight corrects for the uneven document distribution across ranks. The loss is essentially fitting a model against a version of the dataset in which all documents appeared equally in all positions, which is related to corrections used in the position bias estimation literature~\cite{Agarwal2019AllPairs,Fang2019InterventionHarvesting,Benedetto2023ContextualBias}, the policy-aware IPS estimator for selection bias~\cite{Oosterhuis2020PolicyAware,Li2018OffPolicyClickModels}, or the recent work by \citet{Luo2024UnbiasedPropensity}. Note that estimating this propensity is a challenge on its own. In this paper, we calculate the propensities by simply counting how many times a query-document pair was displayed in a given position. Lastly, Fig.~\ref{fig:model-misfit-ips} displays the results of applying the weighting scheme to the same three simulation setups used in our model mismatch experiments. We can see that the weighting scheme does not work under a deterministic logging policy $\tau = 0$, as documents have a probability of one to be displayed in their initial rank and a propensity of zero to be displayed in any other rank. We can see that in all other cases when we introduce increasing levels of variability into our policy, the models converge to identical position bias estimates, independent of the logging policy. Note that the estimated parameters are still slightly biased due to model misspecification, however, the bias amplification by the logging policy has been dampened.

\begin{figure*}[h!]
    \includegraphics[width=1\textwidth]{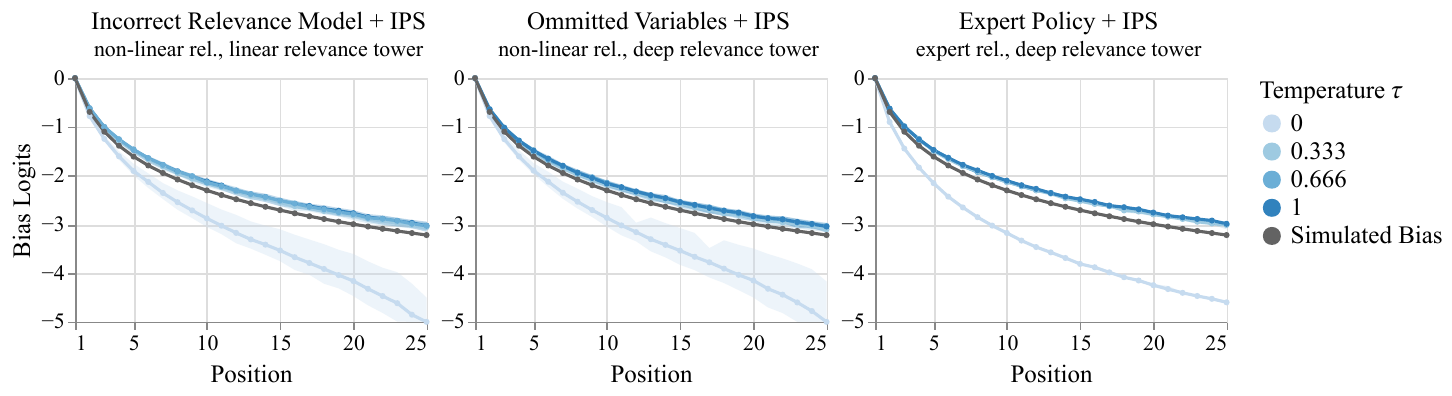}
    \caption{Reweighting samples during training inversely to the propensity of a policy placing an item in a given position. IPS requires that documents have a non-zero chance of occurring in other rank. Therefore, we see a helpful impact of IPS across various levels of randomization, but no improvement on a deterministic policy ($\tau = 0$). Note that while the logging policy impact is minimized, a bias remains in all three examples due to model misspecification.}
    \label{fig:model-misfit-ips}
\end{figure*}

\end{document}